\def\X{{\mathbf X}}
\def\x{{\mathbf x}}
\newcommand\EXP[2]{\ensuremath{\underset{#1}{\mathbb{E}}\left[ #2 \right]}}
\newcommand{\reff}[1]{(\ref{#1})}
\renewcommand\Pr[1]{\text{Pr}\left[{#1}\right]}
\renewcommand\P[2]{\text{P}_{#1}\left({#2}\right)}
\newcommand\sgn[1]{\text{sgn}\left({#1}\right)}
\newtheorem{proposition}{Proposition}
\newtheorem{lemma}{Lemma}
\newtheorem{definition}{Definition}
\begin{document}

%
\title{Canalizing Boolean Functions Maximize\\ the Mutual Information}

\author{\IEEEauthorblockN{Johannes Georg Klotz, David Kracht, Martin Bossert, and Steffen Schober}
\IEEEauthorblockA{Institute of Communications Engineering\\
Ulm University\\
Albert-Einstein-Allee 43,
89073 Ulm, Germany\\
Email: \{johannes.klotz, david.kracht, martin.bossert, steffen.schober\}@uni-ulm.de}}

\maketitle

\begin{abstract}
The ability of information processing in biologically motivated Boolean networks is of interest in recent information theoretic research. One measure to quantify this ability is the well known mutual information. Using Fourier analysis we show that canalizing functions maximize the mutual information between an input variable and the outcome of the function. We proof our result for Boolean functions with uniform distributed as well as product distributed input variables. 
\end{abstract}

\IEEEpeerreviewmaketitle

\section{Introduction}
In systems and computational biology Boolean networks are widely used to model various dependencies. One major application of such networks are regulatory networks \cite{K69, CKR04}. In this context Boolean networks in general and consequently Boolean functions have been extensively studied in the past.

One focus of research is the information processing in such networks. It has been stated, that networks, whose dynamical behavior is critical, i.e., at the edge between stable and chaotic behavior, have somehow an optimized information processing ability \cite{L90, SML96, LF00}. Further, it has been shown, that biologically motivated networks have critical dynamics \cite{K69,K74}. The dynamics depend on two major characteristics of the network, the topology and the choice of functions. Here, so-called \emph{canalizing} Boolean functions seem to have a stabilizing effect \cite{KPS03,KPST04}. It has been shown \cite{SJ08,KFG11}, that regulatory networks, such as the regulatory network of \emph{Escherichia Coli}, consist mainly of canalizing functions. 

One way to formalize and to quantify information procession abilities is the well-known mutual information based on Shannon's theory \cite{S48}. The mutual information is a measure for the statistical relation between some input and output of a system, as for example shown by Fano's inequality \cite{HV94}. It has already been applied in the context of Boolean networks and functions, such as cellular automata \cite{L90}, random Boolean networks \cite{LF00,RKL08} and iterated function systems \cite{CY90}.

In this paper, we use Fourier analysis to investigate Boolean functions. This has first been addressed in \cite{G59}. Recently a relation between the mutual  information and the Fourier spectra has been derived \cite{HSB}. In particular, it has been shown that the mutual information between one input variable and the output only depends on two coefficients, namely the zero coefficient and the first order coefficient of this variable. Further, a relation between the canalizing property and these coefficients has been derived \cite{KSB11}. In this paper, we combine two approaches to show that canalizing functions maximize the mutual information for a given expectation value of the functions output.

The remainder of this paper is structured as follows: In the next section we will introduce some basic definitions and notation. In particular we will introduce the concepts of Fourier analysis as far as they are relevant to this work. In Section \ref{sec:MI} our main results are proven in two steps. First, we will address Boolean function with uniform distributed input variables, secondly, the result is extended to the more general product distributed case. This is followed by a short discussion of our finding (Section \ref{sec:discussion}), before we will conclude with some final remarks.

\section{Basic Definitions and Notation}
\subsection{Boolean Functions and Fourier Analysis}
A Boolean function (BF) $f \in \mathcal F_n = \{ f : \Omega^n  \rightarrow \Omega\}$ with $\Omega=\{-1,+1\}$ maps n-ary input tuples to a binary output. Let us consider $\x=(x_1, x_2, \ldots, x_n )$ as an instance of a product distributed random vector 
$\X=(X_1, X_2, \ldots, X_n )$, i.e., its probability density functions can be written as 
\begin{equation*}
\P{\X}{\x}=\Pr{\X=\x} = \prod_i \P{X_i}{x_i}.
\end{equation*}
Furthermore, let $\mu_i$ be the expected value of $X_i$, i.e., $\mu_i=\EXP{}{X_i}$ and let
\begin{equation} \label{eq:sigma}
\sigma_i=\sqrt{1-\mu_i^2}
\end{equation}
be the standard deviation of $X_i$.
It can be easily seen that 
\begin{equation*}
\P{X_i}{a_i} = \frac{1+a_i\cdot \mu_i}{2}, \text{~for~} a_i \in \{-1,+1\}.
\end{equation*}
It is well known that any BF $f$ can be expressed by the following sum, 
called Fourier-expansion,
\begin{equation*}
f(\x) = \sum_{U \subseteq [n]} \hat{f}(U)\cdot \phi_U(\x),
\end{equation*}
where $[n] = \{1,2,\ldots,n\}$ and
\begin{equation} \label{eq:phi}
\phi_U(\x)=\prod_{i \in U} \frac{x_i-\mu_i}{\sigma_i}.
\end{equation}
For $U=\emptyset$ we define  $\phi_{\emptyset}(\x) = 1$.
The Fourier coefficients $\hat{f}(U)$ can be recovered by
\begin{equation} \label{eq:fourier}
\hat{f}(U) = \sum_{\x \in  } \P{\X}{\x}\cdot f(\x)\cdot \phi_U(\x).
\end{equation}

If the input variables $X_i$ are uniformly distributed, i.e., 
$\mu_i=0$ and $\sigma_i=1$, Eq. \reff{eq:phi} reduces to
\begin{equation*} 
\chi_U(\x)\equiv\phi_U(\x)=\prod_{i \in U} x_i,
\end{equation*}
and consequently, as  $\P{\X}{\x} = 2^{-n}$  for all $\x$, Eq. \reff{eq:fourier} reduces to
\begin{equation*}
\hat{f}(U) = 2^{-n}\sum_{\x } f(\x)\cdot  \chi_U(\x).
\end{equation*}

\subsection{Canalizing Function}
\begin{definition}\label{def:canalizing}
A BF is called canalizing in variable $i$, if there exists a Boolean restrictive value $a_i \in \{-1,+1\}$ such that the function 
\begin{equation}
f(\x|_{x_i=a_i})=  b_i,
\end{equation}
for all $x_1, ... x_{i-1}, x_{i+1} .... x_n$, where $b_i \in \{-1,+1\}$ is a constant. 
\end{definition}

As shown in \cite{KSB11} the Fourier coefficients of canalizing functions then fulfill the following conditions:
\begin{equation}\label{eq:cond:1}
\hat{f}(\emptyset) +   \hat{f}(\{i\}) \cdot  \phi_{\{i\}}(a_i)  = b_i.
\end{equation}
Hence, as stated in \cite{KSB11}, a BF is canalizing in $i$, if and only if $\hat{f}(\emptyset)$ and $\hat{f}(\{i\})$ fulfill Eq.~\reff{eq:cond:1}.
Further, it can be easily seen, that in the uniform distributed case 
\begin{equation}\label{eq:b}
b_i=\sgn{\hat f(\emptyset)},
\end{equation}
where $\sgn{\cdot}$ gives the sign.

\section{Mutual Information of Boolean Functions}
\label{sec:MI}
The mutual information (MI) between two random variables is defined as:
$$
MI(Y;X)=H(Y)-H(Y|X),
$$
where
$$
H(X)=-\sum_{x\in \mathcal{X}}\P{X}{x}\log_2(\P{X}{x})
$$
is Shannon's entropy in bits of some discrete random variable $X$ with its domain $ \mathcal{X}$. For the special case that $|\mathcal{X}|=2$, it reduces to the binary entropy function:
$$
h(p)=-p\log_2(p)-(1-p)\log_2(1-p),
$$
with $p=\P{X}{+1}$.
Further, $H(Y|X)$ is the conditional entropy between two discrete random variables $X\in \mathcal{X}$ and $Y\in \mathcal{Y}$
\begin{align*}
H(Y|X)=\sum_{x\in \mathcal{X}}\P{X}{x}H(Y|X=x),
\end{align*}
with
$$
H(Y|X=x)=-\sum_{y\in \mathcal{Y}}\P{Y|X}{y|x}\log_2\P{Y|X}{y|x}.
$$

It has been shown in \cite{HSB}, that the mutual information between one input variable $i$ and the output of a boolean function is given as:
\begin{align}\label{eq:MIi}
&MI(f(\X);X_i) = h\left(\frac{1}{2}\left(1+\hat f(\emptyset)\right)\right)\\ \nonumber
&-\EXP{X_i}{h\left(\frac{1}{2}\left(1+\hat f(\emptyset) + \hat f(\{i\}) \phi_{\{i\}}(X_i)\right)\right)}.
\end{align}

The fact that the MI is only dependent on $\hat{f}(\emptyset)$ and $\hat{f}(\{i\})$ coincides with our statement in the previous section, that the canalizing property also depend on these two coefficients. Hence, we will only focus on those two Fourier coefficients in the following considerations.  The remaining coefficient can be chosen arbitrarily and have no influence on our findings. Also, the number of input variables $n$ does not restrict our investigations, it only determines the possible values $\hat{f}(\emptyset)$ and $\hat{f}(\{i\})$, since they are a multiple of $2^{-n}$.
\subsection{Mutual Information under Uniform Distribution}
For sake of clarity and ease of comprehension we will first focus on canalizing functions in the uniform distributed case. In the next section we will then generalize this result to product distributed variables.
\begin{proposition}\label{prop:uniform}
Let $f$ be a Boolean function with uniform distributed inputs. For a given and fixed $\hat f (\emptyset )$, the mutual information (see Eq. \reff{eq:MIi}) between one input variable $i$ and the output of $f$, is maximized by all functions, which are canalizing in variable $i$.
\end{proposition}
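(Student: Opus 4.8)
The plan is to reduce the claim to a one–dimensional optimization problem. In the uniform case $\phi_{\{i\}}(X_i)=X_i$ takes the values $+1$ and $-1$ each with probability $\tfrac12$, so the expectation in Eq.~\reff{eq:MIi} is simply the average of two binary-entropy terms. Writing $c=\hat f(\emptyset)$ (which is held fixed) and $t=\hat f(\{i\})$ (the only free quantity), the mutual information becomes
$$
MI(f(\X);X_i)=h\!\left(\tfrac12(1+c)\right)-\tfrac12 h\!\left(\tfrac12(1+c+t)\right)-\tfrac12 h\!\left(\tfrac12(1+c-t)\right).
$$
Since the first term does not depend on $t$, maximizing the mutual information is equivalent to minimizing $g(t)=\tfrac12 h(p+\tfrac{t}{2})+\tfrac12 h(p-\tfrac{t}{2})$, where $p=\tfrac12(1+c)$.

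The second step is to establish the monotonicity of $g$. I would note that $g$ is even in $t$ and then differentiate to obtain $g'(t)=\tfrac14\big(h'(p+\tfrac t2)-h'(p-\tfrac t2)\big)$. Because $h$ is strictly concave on $[0,1]$, its derivative $h'$ is strictly decreasing, so $g'(t)<0$ for every $t>0$. Hence $g$ strictly decreases in $|t|$, and the mutual information is maximized precisely by making $|\hat f(\{i\})|$ as large as the admissible range permits.

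The third step pins down that range and identifies its boundary with canalization. The two arguments $\tfrac12(1+c\pm t)$ are the conditional output probabilities $\P{Y|X_i}{+1|x_i}$ and must lie in $[0,1]$, which forces $|t|\le 1-|c|$; thus the maximum is attained exactly when $|\hat f(\{i\})|=1-|\hat f(\emptyset)|$. Substituting $\phi_{\{i\}}(a_i)=a_i\in\{-1,+1\}$ and $b_i=\sgn{\hat f(\emptyset)}$ from Eq.~\reff{eq:b} into the canalizing condition Eq.~\reff{eq:cond:1} gives $c+t\,a_i=\sgn{c}$, which (choosing $a_i$ to match the sign of $t$) holds if and only if $|t|=1-|c|$. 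Being canalizing in $i$ is therefore exactly the boundary case that minimizes $g$, so every canalizing function attains the maximal mutual information, as claimed.

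I expect the main obstacle to be this third step: correctly deriving the feasibility box $|\hat f(\{i\})|\le 1-|\hat f(\emptyset)|$ from the requirement that the conditional probabilities be valid, and confirming that the boundary is genuinely attainable by Boolean functions (which the very existence of canalizing functions guarantees). The concavity/monotonicity argument is routine, and the algebraic identification with Eq.~\reff{eq:cond:1} is a short sign-chase once the bound is in place.
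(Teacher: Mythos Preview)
Your proposal is correct and follows essentially the same approach as the paper: both fix $\hat f(\emptyset)$, use the (strict) concavity of $h$ to conclude that the mutual information is maximized at the boundary of the admissible interval for $\hat f(\{i\})$, and then identify that boundary with the canalizing condition Eq.~\reff{eq:cond:1}. The only cosmetic difference is that the paper phrases the optimization via convexity of $-h$ and then explicitly evaluates the MI at all four $(a_i,b_i)$ combinations to see they coincide, whereas you obtain the same conclusion more directly from the evenness of $g(t)$ together with $b_i=\sgn{\hat f(\emptyset)}$.
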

\begin{proof}
Since $\hat f (\emptyset )$ is constant, the only remaining degree of freedom in Eq. \reff{eq:MIi} is $\hat f( \{i\})$. First, we show that the mutual information is convex with respect to $\hat f(\{i\})$. Since the first summand of Eq. \reff{eq:MIi} only depends on $\hat f(\emptyset)$ we can consider it as constant. We hence can focus on the second part, which we can write as:
$$
\EXP{X_i}{-h\left(\frac{1}{2}\left(1+\hat f(\emptyset) + \hat f(\{i\}) \chi_{\{i\}}(X_i)\right)\right)}.
$$
The binary entropy function $h$ is concave, and since its argument is an affine mapping, $-h\left(\frac{1}{2}\left(1+\hat f(\emptyset) + \hat f(\{i\}) \chi_{\{i\}}(X_i)\right)\right)$ is convex \cite{Boyd04}.
Finally the expectation is a non-negative weighted sum, which preserves convexity, hence the mutual information is convex.

Obviously for  $\hat f(\{i\})=0$, the mutual information is minimized, hence, due to the convexity, the maximum can be found on the boundaries of the domain. The domain is limited by the non-negativity of the arguments of $h$, i.e.,
$$
0 \leq \frac{1}{2}\left(1+\hat f(\emptyset) + \hat f(\{i\}) \chi_{\{i\}}(X_i)\right) \leq 1.
$$
Hence, the boundaries are given by
$$
\hat f (\emptyset) + \hat f(\{i\}) \chi_{\{i\}}(X_i) = \pm 1.
$$
It can be seen from Definition \ref{def:canalizing}, that all functions, which are canalizing in variable $i$, are located on the boundary of the domain of the mutual information. These function are constrained with:
$$
\hat f(\{i\}) = \frac{b_i - \hat f (\emptyset)}{\chi_{\{i\}}(a_i)}.
$$
Since $a_i,b_i \in \{-1,+1\}$, there exists four such types of functions on the boundary.
Looking at their mutual information leads us to:
\begin{align*}
&MI(f(\X);X_i) = h\left(\frac{1}{2}\left(1+\hat f(\emptyset)\right)\right)\\ \nonumber
&-\EXP{X_i}{h\left(\frac{1}{2}\left(1+\hat f(\emptyset) +  \left(b_i - \hat f (\emptyset) \right)\frac{\chi_{\{i\}}(X_i)}{\chi_{\{i\}}(a_i)}\right)\right)},
\end{align*}
which yields in:
\begin{align*}
&MI(f(\X);X_i) = h\left(\frac{1}{2}\left(1+\hat f(\emptyset)\right)\right)\\ \nonumber
&-\P{X_i}{a_i}\underbrace{{h\left(\frac{1}{2}\left(1+\hat f(\emptyset) +  \left(b_i - \hat f (\emptyset) \right)\frac{\chi_{\{i\}}(a_i)}{\chi_{\{i\}}(a_i)}\right)\right)}}_{=0} \\ \nonumber
&-\P{X_i}{-a_i}{h\left(\frac{1}{2}\left(1+\hat f(\emptyset) +  \left(b_i - \hat f (\emptyset) \right)\frac{\chi_{\{i\}}(-{a}_i)}{\chi_{\{i\}}(a_i)}\right)\right)},
\end{align*}
and hence:
\begin{align} \label{eq:mi_can}
&MI(f(\X);X_i) = h\left(\frac{1}{2}\left(1+\hat f(\emptyset)\right)\right)\\ \nonumber
&-\P{X_i}{-{a}_i}{h\left(\frac{1}{2}\left(1+\hat f(\emptyset) +  \left(b_i - \hat f (\emptyset) \right)\frac{\chi_{\{i\}}(-{a}_i)}{\chi_{\{i\}}(a_i)}\right)\right)}.
\end{align}
For the uniform distributed case we write:
\begin{align*}
&MI(f(\X);X_i) = h\left(\frac{1}{2}\left(1+\hat f(\emptyset)\right)\right)\\ \nonumber
&-\frac{1}{2}h\left(\frac{1-b_i}{2}+ \hat f(\emptyset)\right).
\end{align*}
Due to $b_i=\sgn{\hat f(\emptyset)}$ and the symmetry of $h$, we finally get:
\begin{align*}
&MI(f(\X);X_i) = h\left(\frac{1}{2}\left(1+\hat f(\emptyset)\right)\right)-\frac{1}{2}h\left(|\hat f(\emptyset)|\right).
\end{align*}
Hence, the mutual information is independent from $a_i$ and $b_i$, which concludes the proof.
\end{proof}
\subsection{Mutual Information under Product Distribution}
The result from the previous section can be extended to product distributed input variables. We will see, that the probability distribution of the canalizing variable plays a key role in maximizing the MI.

\begin{proposition}
Let $f$ be a Boolean function with product distributed inputs. For a given and fixed $\hat f (\emptyset )$, the mutual information (see Eq. \reff{eq:MIi}) between one input variable $i$ and the output of $f$, is maximized by all functions, which are canalizing in variable $i$, where $a_i$ and $b_i$ are chosen as follows
\begin{align*}
(a_i,b_i) =
\begin{cases}
\left(\sgn{\mu_i},\sgn{\hat f(\emptyset)}\right) &  |\hat f (\emptyset) |\geq  |\mu_i|\\
\left(-\sgn{\mu_i},-\sgn{\hat f(\emptyset)}\right) &  |\hat f (\emptyset) |<  |\mu_i|
\end{cases}
\end{align*}
\end{proposition}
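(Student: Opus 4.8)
The plan is to reuse the convexity argument from the proof of Proposition~\ref{prop:uniform} verbatim. The argument of $h$ in \reff{eq:MIi} is still affine in $\hat f(\{i\})$ in the product case (now through $\phi_{\{i\}}(X_i)$ instead of $\chi_{\{i\}}(X_i)$), so for fixed $\hat f(\emptyset)$ the mutual information remains convex in $\hat f(\{i\})$ and is maximized on the boundary of the feasible interval. That boundary is where $\hat f(\emptyset)+\hat f(\{i\})\phi_{\{i\}}(X_i)=\pm 1$ for one of the two values $X_i=\pm 1$, which by Definition~\ref{def:canalizing} and \reff{eq:cond:1} is exactly the canalizing condition. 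The new difficulty is that $\phi_{\{i\}}(+1)$ and $\phi_{\{i\}}(-1)$ differ in magnitude, so the two endpoints of the feasible interval are no longer symmetric and need not give equal mutual information; deciding which endpoint (hence which pair $(a_i,b_i)$) wins is the entire content of the proposition.

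Rather than comparing the two endpoint values of \reff{eq:mi_can} head-on, I would pass to the joint distribution of $(X_i,f(\X))$. Since $MI(f(\X);X_i)=H(f(\X))+H(X_i)-H(f(\X),X_i)$ and both marginals are fixed ($\P{X_i}{+1}=\frac{1+\mu_i}{2}$ is given and $\Pr{f(\X)=+1}=\frac{1}{2}(1+\hat f(\emptyset))$ is pinned by the constraint), maximizing the mutual information is equivalent to \emph{minimizing the joint entropy} $H(f(\X),X_i)$. A function canalizing in $i$ with pair $(a_i,b_i)$ forces exactly one cell of the $2\times2$ joint table, the cell $(X_i=a_i,\,f=-b_i)$, to vanish; the four sign choices correspond to the four cells, and non-negativity of the remaining cells selects which are feasible. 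These feasible choices are precisely the two interval endpoints from the first paragraph.

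The key computation is that, for a $2\times2$ table with one vanishing cell and fixed margins, the three surviving cells consist of a single shared cell (of mass $1-q$ or $p_-$, where $q=\frac{1}{2}(1+\hat f(\emptyset))$ and $p_-=\P{X_i}{-1}$) together with two cells that sum to the remaining mass $m$; writing that split as $(x,m-x)$, its entropy contribution equals $m\,h(x/m)$ up to the margin-dependent constant $-m\log_2 m$. Comparing the two feasible types therefore reduces to comparing $h$ at two arguments, which I would settle using the symmetry and unimodality of $h$, namely $h(u)\le h(w)$ iff $|u-\frac{1}{2}|\ge|w-\frac{1}{2}|$. Taking $\mu_i>0$ and $\hat f(\emptyset)\ge 0$ without loss of generality, with the remaining sign patterns following from the symmetries $X_i\mapsto -X_i$ and $f\mapsto -f$ that flip $(\mu_i,a_i)$ and $(\hat f(\emptyset),b_i)$ respectively, the required distance comparison collapses to an inequality of the form $|\alpha+\beta|\ge|\alpha-\beta|$ with $\beta>0$ and $\alpha\in\{\hat f(\emptyset),\,1-\hat f(\emptyset)\}\ge 0$, which holds trivially. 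The threshold between the two admissible endpoint types sits exactly at $q=\P{X_i}{+1}$, i.e. at $|\hat f(\emptyset)|=|\mu_i|$, which is precisely where the stated case distinction switches between $(\sgn{\mu_i},\sgn{\hat f(\emptyset)})$ and $(-\sgn{\mu_i},-\sgn{\hat f(\emptyset)})$.

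I expect the entropy comparison of the third paragraph to be the main obstacle: one must show that a single endpoint dominates over the \emph{entire} admissible range of $\hat f(\emptyset)$, not merely at isolated points, and the reduction to one split-mass evaluation of $h$ together with the unimodality criterion is what makes this tractable. The most error-prone bookkeeping is tracking which of the four $(a_i,b_i)$ is feasible on each side of the threshold and verifying that the non-restrictive conditional probability stays in $[0,1]$; this must be done carefully before the symmetry reduction is invoked.
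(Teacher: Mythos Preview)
Your first two paragraphs match the paper exactly: the convexity argument carries over with $\phi_{\{i\}}$ in place of $\chi_{\{i\}}$, and the boundary of the feasible interval for $\hat f(\{i\})$ is precisely the canalizing locus \reff{eq:cond:1}. The divergence is in how you decide which of the two feasible endpoints wins, and here your route is genuinely different from the paper's and, in my view, cleaner.

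The paper stays with the conditional entropy $H(f(\X)\mid X_i)$, writes it out explicitly for each of the four $(a_i,b_i)$ as four one-parameter functions $s,t,q,r$ of $\hat f(\emptyset)$, uses Parseval to determine which two of the four are feasible in each of the three ranges $\hat f(\emptyset)\le -|\mu_i|$, $|\hat f(\emptyset)|\le|\mu_i|$, $\hat f(\emptyset)\ge|\mu_i|$, and then proves three separate lemmas comparing the relevant pair by a concavity-plus-derivative argument (looking at values at the interval endpoints and at the sign of the difference of derivatives in a limit). Your reformulation via $MI=H(f)+H(X_i)-H(f,X_i)$ replaces all of that calculus by a single observation: with one cell of the $2\times2$ table forced to zero and both margins fixed, the three surviving masses are completely determined, the two competing tables share one cell, and the remaining two cells have the same sum $m$ in both, so the comparison is exactly $h(x_1/m)$ versus $h(x_2/m)$. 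The unimodality criterion $h(u)\le h(w)\iff |u-\tfrac12|\ge|w-\tfrac12|$ then reduces everything to the trivial sign inequality $|\alpha+\beta|\ge|\alpha-\beta|$ for $\alpha,\beta\ge 0$; I checked that after the WLOG reduction this is literally $4(1-q)(2p-1)\ge 0$ in one regime and $4(1-p)(2q-1)\ge 0$ in the other, so it goes through. The symmetry reduction $X_i\mapsto -X_i$, $f\mapsto -f$ is legitimate and cuts the bookkeeping to a quarter of what the paper does.

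What the paper's approach buys is that it never leaves the conditional-entropy expression \reff{eq:mi_can}, so the connection to the Fourier coefficients stays explicit throughout; what your approach buys is that the endpoint comparison becomes a one-line inequality instead of three lemmas with limit arguments. Your caveat in the last paragraph is apt: you should still record explicitly, for each sign pattern, which cell vanishes and verify the remaining cell is nonnegative (this is equivalent to the Parseval feasibility check the paper does), but once that table is written down the rest is immediate.
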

\begin{proof}
The first part of this proof goes along with the proof of Proposition \ref{prop:uniform}, where we simply replace $\chi_U(\x)$ by $\phi_U(\x)$. Hence, we can again show that the MI is convex and that the boundary consists of the canalizing functions. 
Starting from Eq. \reff{eq:mi_can}, we hence get
\begin{align*}
&MI(f(\X);X_i) = h\left(\frac{1}{2}\left(1+\hat f(\emptyset)\right)\right)- H(f(\X)|X_i),
\end{align*}
where
\begin{align*}
&H(f(\X)|X_i)\\ &= \P{X_i}{-{a}_i}{h\left(\frac{1}{2}\left(1+\hat f(\emptyset) -  \left(b_i - \hat f (\emptyset) \right) \frac{(\mu_i+a_i)^2}{\sigma_i^2}\right)\right)}.
\end{align*}
Obviously there are four possible sets of $a_i$ and $b_i$. However, the for each choice of $\hat f(\emptyset)$ there exists two possible choices of $a_i$ and $b_i$, of which only one maximizes the MI. 

Lets first look at the possible combinations of $a_i$ and $b_i$ in dependence of $\hat f(\emptyset)$.
From Parsevals theorem we know hat
$$
\hat f(\emptyset)^2 + \hat f({i})^2 \leq 1,
$$
and hence
$$
\hat f(\emptyset)^2 + \left( \frac{b_i - \hat f (\emptyset)}{\phi_{\{i\}}(a_i)}\right)^2 \leq 1,
$$
Solving that inequation for $\hat f(\emptyset)$ leads us to the possible sets of $a_i$ and $b_i$:
\begin{align*}
&a_i=\pm 1 \text{~and~} b_i=-1  &\text{~if~}& -1 \leq  \hat f(\emptyset) \leq -|\mu_i|\\
&a_i=-\sgn{\mu_i} \text{~and~} b_i=\pm 1 &\text{~if~}& -|\mu_i| \leq  \hat f(\emptyset) \leq |\mu_i| \\
&a_i= \pm 1 \text{~and~} b_i=1 &\text{~if~}& |\mu_i| \leq  \hat f(\emptyset) \leq 1 
\end{align*}

Hence, to maximize the MI, we have to minimize $H(f(\X)|X_i)$ for each possible choice of $\hat f(\emptyset)$. We can rewrite $H(f(\X)|X_i)$ for all four combinations of $a_i$ and $b_i$ as follows:
\begin{align*}
s(\hat f(\emptyset))= \P{X_i}{-1}{h\left(\frac{\hat f(\emptyset) -\mu_i}{1-\mu_i}\right)}& \text{~if~} a_i=+1 \text{~and~} b_i=+1\\
t(\hat f(\emptyset))= \P{X_i}{+1}{h\left(\frac{\hat f(\emptyset) +\mu_i}{1+\mu_i}\right)}& \text{~if~} a_i=-1 \text{~and~} b_i=+1\\
q(\hat f(\emptyset))= \P{X_i}{-1}{h\left(\frac{\hat f(\emptyset) +1}{1-\mu_i}\right)} & \text{~if~} a_i=+1 \text{~and~} b_i=-1\\
r(\hat f(\emptyset))= \P{X_i}{+1}{h\left(\frac{\hat f(\emptyset) +1}{1+\mu_i}\right)} & \text{~if~} a_i=-1 \text{~and~} b_i=-1.
\end{align*}
Now, lets assume $-1 \leq  \hat f(\emptyset) \leq -|\mu_i|$, i.e., $b_i=-1$. Hence, have to compare $q(\hat f(\emptyset))$ and $r(\hat f(\emptyset))$ and search for the correct choice of $a_i$ which maximizes the MI. Lemma \ref{lem:a1} (can be found in the Appendix) shows that this is achieved if choosing $a_i=\sgn{\mu_i}$. 

If $|\mu_i| \leq  \hat f(\emptyset) \leq 1$, i.e., $b_i=+1$ we have again to compare the choices of $a_i=+1$ and $a_i=-1$. As above $a_i$ must be chosen to be $\sgn{\mu_i}$ in order to maximize the MI (see Lemma \ref{lem:a2} in the Appendix).

Now let $-|\mu_i| \leq  \hat f(\emptyset) \leq |\mu_i|$, hence we need to choose between $b_i=-1$ and $b_i=+1$. Lemma \ref{lem:a3} (Appendix) shows that in this case the MI is maximized if $b_i=-\sgn{\hat f(\emptyset))}$, which concludes the proof.
\end{proof}

\section{Discussion and Conclusion}
\label{sec:discussion}

\begin{figure}[tb]
  \centering
  \includegraphics{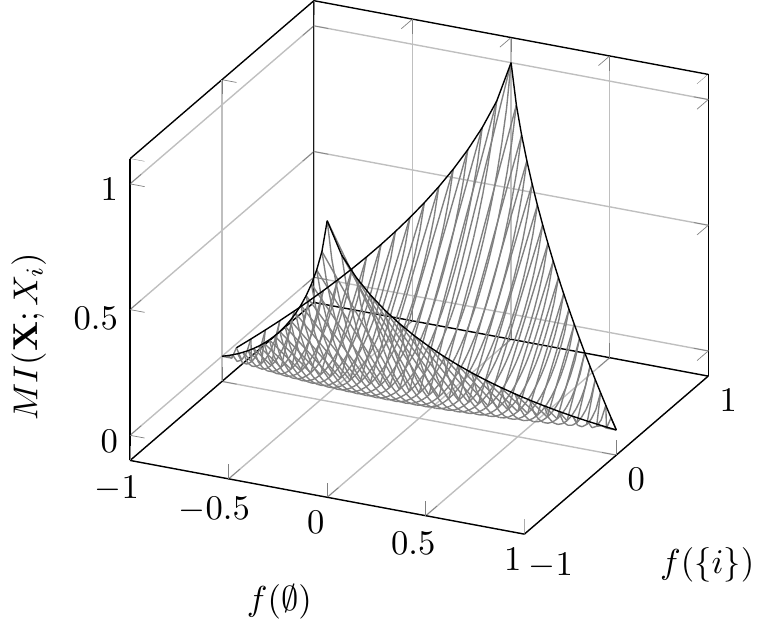}
%
%
%
%
%
 \caption{Mutual information of BFs with uniform distributed input variables versus $\hat f(\emptyset)$ and $\hat f(\{i\})$, one can see that all in $i$ canalizing functions are located on the border (black line)}
  \label{fig:3d:uni}
\end{figure}
\begin{figure}[tb]
  \centering
  \includegraphics{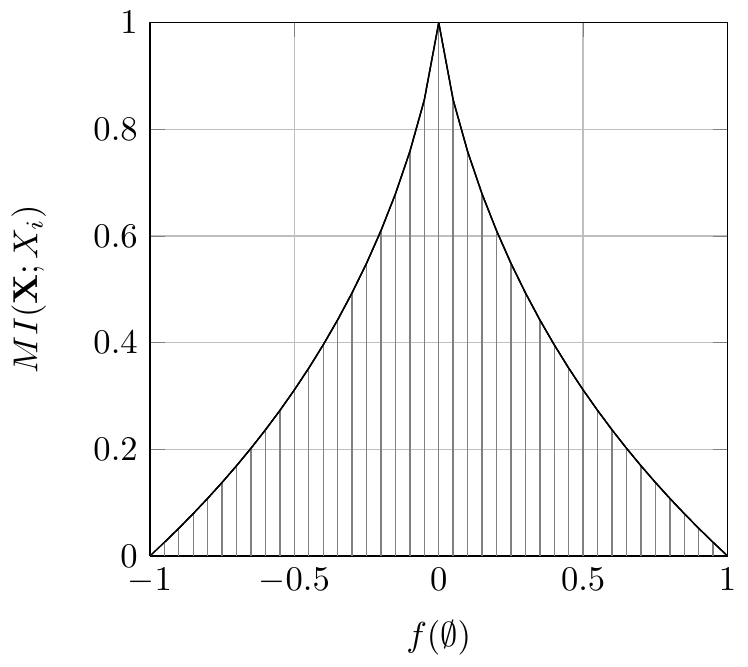}
  
%
%
%
%
%
%
  \caption{Mutual information of Fig. \ref{fig:3d:uni} projected in the ($\hat f(\emptyset),MI$)-plane, one can see that all in $i$ canalizing functions are located on the border (black line)}
  \label{fig:2d:uni}
\end{figure}

\begin{figure}[tb]
  \centering
      \includegraphics{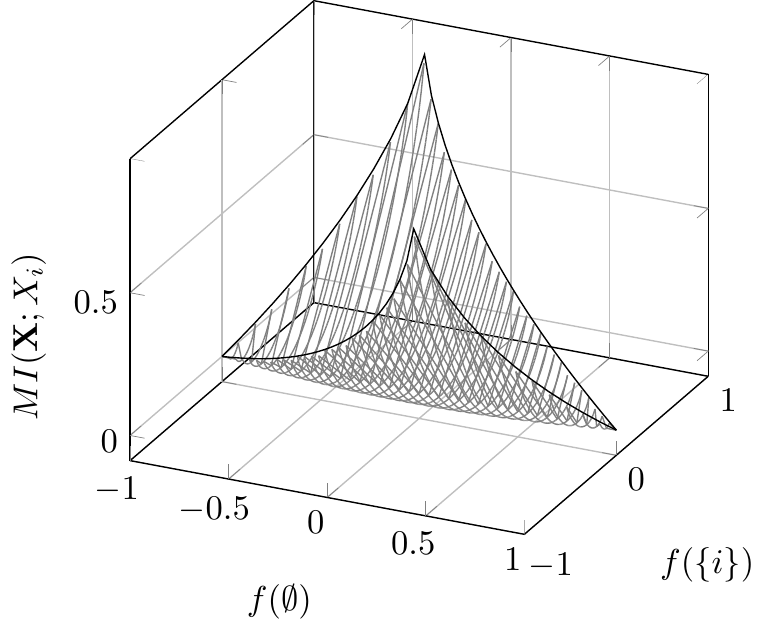}
  
%
%
%
%
%
%
  \caption{Mutual information of a BF with product distributed input variables versus $\hat f(\emptyset)$ and $\hat f(\{i\})$, $p_i=0.3$,  one can see that all in $i$ canalizing functions are located on the border (black line)}
  \label{fig:3d:prod}
\end{figure}
\begin{figure}[tb]
  \centering
      \includegraphics{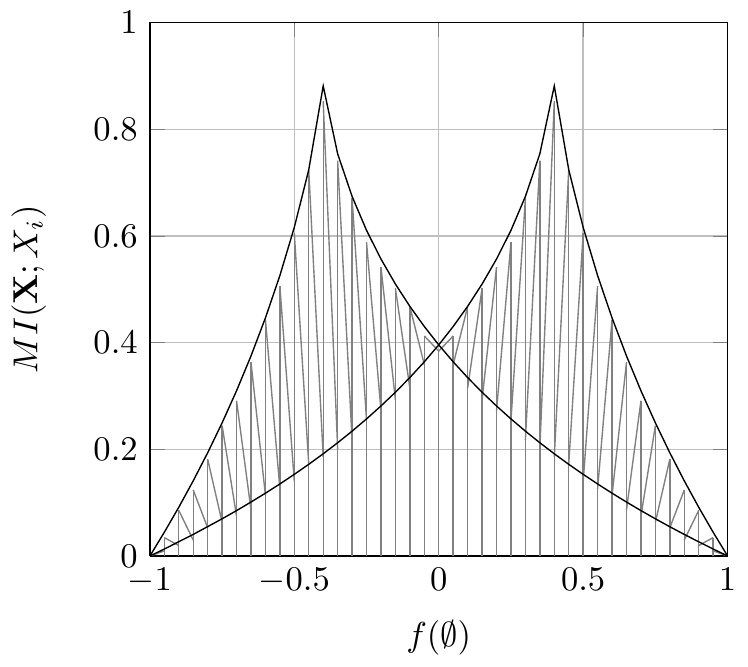}
  
%
%
%
%
%
%
%
%
  \caption{Mutual information of Fig. \ref{fig:3d:prod} projected in the ($\hat f(\emptyset),MI$)-plane, one can see that all in $i$ canalizing functions are located on the border (black line)}
  \label{fig:2d:prod}
\end{figure}

To visualize our findings we plotted in Figure \ref{fig:3d:uni} a 3D diagram of the mutual information of a BF with uniform distributed input variables versus $\hat f(\emptyset)$ and $\hat f(\{i\})$. In Figure \ref{fig:2d:uni} we present a projection of the surface in the ($\hat f(\emptyset),MI$)-plane. It can be seen from these pictures that the canalizing function form the boundary of the domain of the MI. Further, the symmetry with respect to $a_i$ and $b_i=\sgn{\hat f(\emptyset)}$ can be seen.
In addition it becomes visible, that the mutual information also depends of the actual zero coefficient. This is mainly due to the first term of the MI (Eq. \ref{eq:MIi}), the entropy of the functions output.

In Figures \ref{fig:3d:prod} and \ref{fig:2d:prod} the same plots can be found for product distributed input variables, with $p_i=0.3$. Here, the skew of the mutual information towards the more probable canalizing value and the symmetry with respect to $b_i$ can bee seen. 

Our findings show the optimality of canalizing functions with respect to information processing abilities. Further, it has been stated in literature \cite{KPST04}, that canalizing functions have a stabilizing influence on the network dynamics. This supports the conjectures \cite{L90, SML96, LF00}, that these two properties are closely related. 

An open problem remains the impact of canalizing function on the mutual information between a set of variables and the function's output. One may presume that based on the results of this paper, that it is maximized by functions, which are somehow canalizing in all that variables.

\section*{Acknowledgment}
This work was supported by the German research council "Deutsche Forschungsgemeinschaft" (DFG) under Grant Bo 867/25-2. 

\appendix
\begin{lemma} \label{lem:a1}
If $-1 \leq  \hat f(\emptyset) \leq -|\mu_i|$, then
\begin{align*} 
q(\hat f(\emptyset))< r(\hat f(\emptyset)) &\text{~if~} \mu_i >0\\
q(\hat f(\emptyset))>r(\hat f(\emptyset))  &\text{~if~} \mu_i <0.
\end{align*}
\end{lemma}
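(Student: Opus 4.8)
The plan is to reduce the comparison of $q$ and $r$ to a single monotonicity statement about an auxiliary function. First I would substitute the explicit marginal probabilities $\P{X_i}{-1} = \frac{1-\mu_i}{2}$ and $\P{X_i}{+1} = \frac{1+\mu_i}{2}$ into the definitions, so that
$$q(\hat f(\emptyset)) = \frac{1-\mu_i}{2}\, h\!\left(\frac{\hat f(\emptyset)+1}{1-\mu_i}\right), \qquad r(\hat f(\emptyset)) = \frac{1+\mu_i}{2}\, h\!\left(\frac{\hat f(\emptyset)+1}{1+\mu_i}\right).$$
Writing $c = \hat f(\emptyset)+1$, which satisfies $0 \le c \le 1-|\mu_i|$ on the interval under consideration, both quantities are instances of the single function $g(s) = \frac{s}{2}\, h(c/s)$, namely $q = g(1-\mu_i)$ and $r = g(1+\mu_i)$. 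The lemma then follows purely from how $g$ varies with $s$, together with the elementary observation that $1-\mu_i < 1+\mu_i$ precisely when $\mu_i > 0$.

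Next I would establish that $g$ is nondecreasing in $s$ on the relevant range (where $c/s \in [0,1]$). The key is to differentiate $s\,h(c/s)$. Using $h'(p) = \log_2\frac{1-p}{p}$, the chain rule gives $\frac{d}{ds}[s\,h(c/s)] = h(c/s) - \frac{c}{s}\,h'(c/s)$, and after substituting $p = c/s$ the logarithmic terms cancel to leave $\frac{d}{ds}[s\,h(c/s)] = -\log_2(1-c/s)$. Hence $g'(s) = -\frac{1}{2}\log_2\!\left(1-\frac{c}{s}\right) \ge 0$, since $1-c/s \in [0,1]$, and the derivative is strictly positive whenever $c > 0$, i.e. whenever $\hat f(\emptyset) > -1$. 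So $g$ is strictly increasing in $s$ throughout the interior of the interval.

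With the monotonicity in hand, the conclusion is immediate: if $\mu_i > 0$ then the argument $1-\mu_i$ of $q$ is smaller than the argument $1+\mu_i$ of $r$, so $q(\hat f(\emptyset)) < r(\hat f(\emptyset))$; if $\mu_i < 0$ the two arguments swap order and the inequality reverses. Before applying the monotonicity I would verify that both arguments of $h$ lie in $[0,1]$ on the hypothesized interval, which follows from $c \le 1-|\mu_i|$: the larger of $\frac{c}{1-\mu_i}$ and $\frac{c}{1+\mu_i}$ is $\frac{c}{1-|\mu_i|} \le 1$, and the smaller is automatically below it.

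I expect the only delicate step to be the derivative cancellation that produces the clean expression $-\log_2(1-c/s)$; once that identity is in place, the sign of $g'$ and the ordering of the two arguments are routine. A minor caveat worth noting is the degenerate endpoint $\hat f(\emptyset) = -1$ (i.e. $c=0$), where $q = r = 0$ and the inequality fails to be strict; this corresponds to a constant function and can be excluded or treated separately without affecting the maximization argument of the main proposition.
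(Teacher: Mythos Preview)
Your argument is correct and is genuinely different from the paper's. The paper fixes $\hat f(\emptyset)$ and treats $q$ and $r$ as two separate concave ``parabolic'' functions of $\hat f(\emptyset)$; it then argues they can intersect at most twice, checks the two endpoints $\hat f(\emptyset)=-1$ and $\hat f(\emptyset)=-\mu_i$, and computes $\lim_{\hat f(\emptyset)\to -1}\bigl(r'(\hat f(\emptyset))-q'(\hat f(\emptyset))\bigr)=+\infty$ to pin down which curve lies on top between those endpoints. Your route instead freezes $c=\hat f(\emptyset)+1$ and views $q$ and $r$ as the \emph{same} function $g(s)=\tfrac{s}{2}\,h(c/s)$ evaluated at two different scale parameters $s=1\mp\mu_i$; the single identity $\frac{d}{ds}\bigl[s\,h(c/s)\bigr]=-\log_2(1-c/s)\ge 0$ then settles the comparison in one stroke. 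This is cleaner and avoids the somewhat informal ``at most two intersections'' step in the paper. The caveat you flag at $\hat f(\emptyset)=-1$ (where $q=r=0$ and strictness fails) is real but harmless for the use made of the lemma.
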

\begin{proof}
First, lets recall, that
$$
q(\hat f(\emptyset))= \frac{1-\mu_i}{2}h\left(\frac{\hat f(\emptyset) +1}{1-\mu_i}\right)
$$
and
$$
r(\hat f(\emptyset))= \frac{1+\mu_i}{2}h\left(\frac{\hat f(\emptyset) +1}{1+\mu_i}\right)
$$
Lets assume that $\mu_i >0$. Due to the concavity and the parabolic form of $q(\hat f(\emptyset))$ and $r(\hat f(\emptyset))$, they can intersect at most two times. Obviously, $q(-1)=r(-1)=0$, and $q(-\mu_i)=0 < r(-\mu_i)$. Hence, if the slope of $r$ at $\hat f(\emptyset) =-1$ is larger than the slope of $s$, then $q(\hat f(\emptyset))< r(\hat f(\emptyset))$ on the interval $-1 \leq  \hat f(\emptyset) \leq -\mu_i$. 

Building the derivative of $q(\hat f(\emptyset))$ leads us to 
\begin{align*}
q'(\hat f(\emptyset))&= \P{X_i}{-1} \log\left(\frac{1-\frac{\hat f(\emptyset) +1}{1-\mu_i}}{\frac{\hat f(\emptyset) +1}{1-\mu_i}}\right)\\
&= \P{X_i}{-1} \log\left(\frac{-\mu_i-\hat f(\emptyset)}{\hat f(\emptyset) +1}\right)
\end{align*}
and similar
\begin{align*}
r'(\hat f(\emptyset))&= \P{X_i}{+1} \log\left(\frac{+\mu_i-\hat f(\emptyset)}{\hat f(\emptyset) +1}\right)
\end{align*}
One can see that
\begin{align*}
\lim_{f \to -1} \left(r'(\hat f(\emptyset))-q'(\hat f(\emptyset)) \right) = +\infty,
\end{align*}
which concludes the proof for $\mu_i >0$.
The proof for $\mu_i<0$ goes along the lines as for  $\mu_i>0$.
\end{proof}

\begin{lemma} \label{lem:a2}
If $|\mu_i| \leq \hat f(\emptyset) \leq 1$, then
\begin{align*} 
s(\hat f(\emptyset))< t(\hat f(\emptyset)) &\text{~if~} \mu_i >0\\
s(\hat f(\emptyset))>t(\hat f(\emptyset))  &\text{~if~} \mu_i <0.
\end{align*}
\end{lemma}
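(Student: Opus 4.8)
The plan is to follow the proof of Lemma~\ref{lem:a1} almost verbatim, since $s$ and $t$ have the same shape as $q$ and $r$ there, only with the argument of $h$ shifted. First I would recall the explicit forms
$$
s(\hat f(\emptyset))= \P{X_i}{-1}\, h\left(\frac{\hat f(\emptyset)-\mu_i}{1-\mu_i}\right),\qquad
t(\hat f(\emptyset))= \P{X_i}{+1}\, h\left(\frac{\hat f(\emptyset)+\mu_i}{1+\mu_i}\right),
$$
and observe that each is a positive multiple of the binary entropy $h$ composed with an affine map, hence concave and of parabolic form; consequently $s$ and $t$ can intersect at most twice. Assume $\mu_i>0$ (the case $\mu_i<0$ being symmetric, with the roles of $s$ and $t$ exchanged). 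On the interval $\mu_i\le\hat f(\emptyset)\le 1$ the endpoint values are $s(\mu_i)=\P{X_i}{-1}h(0)=0$ and $s(1)=\P{X_i}{-1}h(1)=0$, while $t(1)=\P{X_i}{+1}h(1)=0$ but $t(\mu_i)=\P{X_i}{+1}h\left(\frac{2\mu_i}{1+\mu_i}\right)>0$. Thus the two curves meet at $\hat f(\emptyset)=1$ (both vanishing there), and the desired inequality $s<t$ already holds at the left endpoint.

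Next I would decide which curve dominates immediately to the left of the shared zero by a slope comparison there, exactly as in Lemma~\ref{lem:a1}. Differentiating gives
$$
s'(\hat f(\emptyset))= \P{X_i}{-1}\log\left(\frac{1-\hat f(\emptyset)}{\hat f(\emptyset)-\mu_i}\right),\qquad
t'(\hat f(\emptyset))= \P{X_i}{+1}\log\left(\frac{1-\hat f(\emptyset)}{\hat f(\emptyset)+\mu_i}\right).
$$
Both slopes diverge to $-\infty$ as $\hat f(\emptyset)\to1^-$, so the decisive object is their difference. Setting $\hat f(\emptyset)=1-\epsilon$, the dominant $\log(1-\hat f(\emptyset))$ terms carry the unequal weights $\P{X_i}{-1}$ and $\P{X_i}{+1}$, whose difference is $-\mu_i$, so they do not cancel and one finds
$$
\lim_{\hat f(\emptyset)\to1^-}\left(s'(\hat f(\emptyset))-t'(\hat f(\emptyset))\right)=+\infty\quad\text{for }\mu_i>0 .
$$
Hence $t$ approaches the common zero with the steeper (more negative) slope, and therefore $t(\hat f(\emptyset))>s(\hat f(\emptyset))$ for $\hat f(\emptyset)$ just below $1$.

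Finally I would exclude any crossing inside the open interval. Since $t>s$ both at the left endpoint $\hat f(\emptyset)=\mu_i$ and immediately to the left of $\hat f(\emptyset)=1$, any interior intersection would force $t-s$ to change sign and then return to a positive value, producing at least two interior zeros in addition to the shared zero at $1$; that is at least three intersections, contradicting the parabolic-form bound of at most two. Therefore $s(\hat f(\emptyset))<t(\hat f(\emptyset))$ on all of $\mu_i\le\hat f(\emptyset)<1$, which is the assertion for $\mu_i>0$, and the reversed inequality for $\mu_i<0$ follows by the symmetric argument. I expect the slope comparison at the shared zero to be the main obstacle: because both derivatives blow up, one has to isolate the leading logarithmic term and verify that the mismatch between the weights $\P{X_i}{-1}$ and $\P{X_i}{+1}$ is exactly what fixes the sign of the divergence.
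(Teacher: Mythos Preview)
Your proposal is correct and follows essentially the same route as the paper: recall the explicit forms of $s$ and $t$, invoke concavity/parabolic shape to bound the number of intersections, check the endpoint values $s(\mu_i)=0<t(\mu_i)$ and $s(1)=t(1)=0$, compute the same derivatives, and compare slopes at the shared zero via the divergent limit (your $\lim(s'-t')=+\infty$ is exactly the paper's $\lim(t'-s')=-\infty$). Your counting argument ruling out interior crossings is in fact more explicit than the paper's somewhat terse conclusion, but the underlying strategy is identical.
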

\begin{proof}
First, lets recall that
$$
s(\hat f(\emptyset))= \frac{1-\mu_i}{2}{h\left(\frac{\hat f(\emptyset) -\mu_i}{1-\mu_i}\right)}
$$
and
$$
t(\hat f(\emptyset))= \frac{1+\mu_i}{2}{h\left(\frac{\hat f(\emptyset) +\mu_i}{1+\mu_i}\right)}.
$$

Now we assume, that $\mu_i >0$. Due to the concavity and the parabolic form of  $s(\hat f(\emptyset))$ and $t(\hat f(\emptyset))$, they can intersect at most two times. Obviously, $s(+1)=t(+1)=0$, and $s(\mu_i)=0 < t(\mu_i)$. Hence, if the slope of $t$ at $\hat f(\emptyset) =-1$ is larger than the slope of $s$, then $s(\hat f(\emptyset))< t(\hat f(\emptyset))$ on the interval $\mu_i \leq  \hat f(\emptyset) \leq 1$. 

Building the derivative of $s(\hat f(\emptyset))$ leads us to 
\begin{align*}
s'(\hat f(\emptyset))&= \P{X_i}{-1} \log\left(\frac{1-\frac{\hat f(\emptyset) -\mu_i}{1-\mu_i}}{\frac{\hat f(\emptyset) -\mu_i}{1-\mu_i}}\right)\\
&= \P{X_i}{-1} \log\left(\frac{1-\hat f(\emptyset)}{\hat f(\emptyset) -\mu_i}\right)
\end{align*}
and similar
\begin{align*}
t'(\hat f(\emptyset))&= \P{X_i}{+1} \log\left(\frac{1-\hat f(\emptyset)}{\hat f(\emptyset) +\mu_i}\right)
\end{align*}
One can see that
\begin{align*}
\lim_{\hat f(\emptyset) \to 1} \left(t'(\hat f(\emptyset))-s'(\hat f(\emptyset)) \right) = -\infty,
\end{align*}
which concludes the proof for $\mu_i >0$.
The proof for $\mu_i<0$ goes along the lines as for  $\mu_i>0$.
\end{proof}

\begin{lemma} \label{lem:a3}
If $\mu_i > 0$ and $|\hat f(\emptyset)| \leq |\mu_i|$,  then 
\begin{align*} 
t(\hat f(\emptyset))< r(\hat f(\emptyset)) &\text{~if~} \hat f(\emptyset)<0\\
t(\hat f(\emptyset))> r(\hat f(\emptyset))  &\text{~if~} \hat f(\emptyset)>0.
\end{align*}
If $\mu_i < 0$ and $|\hat f(\emptyset)| \leq |\mu_i|$, then
\begin{align*} 
s(\hat f(\emptyset))< q(\hat f(\emptyset)) &\text{~if~} \hat f(\emptyset)<0\\
s(\hat f(\emptyset))> q(\hat f(\emptyset))  &\text{~if~} \hat f(\emptyset)>0.
\end{align*}
\end{lemma}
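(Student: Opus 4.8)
The plan is to exploit a structural feature that distinguishes this lemma from Lemmas~\ref{lem:a1} and~\ref{lem:a2}. There, the two candidate functions carried \emph{different} prefactors $\tfrac{1-\mu_i}{2}$ and $\tfrac{1+\mu_i}{2}$, which is what forced the derivative-and-limit argument. Here, by contrast, the two functions being compared share the \emph{same} prefactor: for $\mu_i>0$ both $t$ and $r$ correspond to $a_i=-1$ and hence carry $\P{X_i}{+1}=\tfrac{1+\mu_i}{2}$, while for $\mu_i<0$ both $s$ and $q$ correspond to $a_i=+1$ and carry $\P{X_i}{-1}=\tfrac{1-\mu_i}{2}$. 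So after cancelling the common positive factor the whole question reduces to comparing the binary entropy $h$ at two points, and recognizing this simplification is the first step I would take.

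Treating $\mu_i>0$, I would write the two arguments of $h$ as $u=\tfrac{\hat f(\emptyset)+\mu_i}{1+\mu_i}$ (for $t$) and $v=\tfrac{\hat f(\emptyset)+1}{1+\mu_i}$ (for $r$), so that the sign of $t-r$ equals the sign of $h(u)-h(v)$. A short check using $|\hat f(\emptyset)|\le\mu_i$ shows both arguments lie in $[0,1]$, and $v-u=\tfrac{1-\mu_i}{1+\mu_i}>0$, i.e.\ $u<v$. Since $h$ is symmetric about $\tfrac12$ and strictly concave, it is strictly decreasing in $|p-\tfrac12|$, so $h(u)>h(v)$ holds exactly when $|u-\tfrac12|<|v-\tfrac12|$. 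The one computation I would carry out is
\begin{align*}
\left(v-\tfrac12\right)^2-\left(u-\tfrac12\right)^2 = (v-u)(u+v-1) = \tfrac{1-\mu_i}{1+\mu_i}\cdot\tfrac{2\hat f(\emptyset)}{1+\mu_i},
\end{align*}
which, because $v-u>0$, shows that the sign of $|v-\tfrac12|-|u-\tfrac12|$ equals the sign of $\hat f(\emptyset)$. Hence $\hat f(\emptyset)<0$ gives $h(u)<h(v)$, i.e.\ $t<r$, while $\hat f(\emptyset)>0$ gives $t>r$, as claimed; at $\hat f(\emptyset)=0$ the two coincide.

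The case $\mu_i<0$ is entirely parallel: writing $u'=\tfrac{\hat f(\emptyset)-\mu_i}{1-\mu_i}$ (for $s$) and $v'=\tfrac{\hat f(\emptyset)+1}{1-\mu_i}$ (for $q$), one finds $v'-u'=\tfrac{1+\mu_i}{1-\mu_i}>0$ and $u'+v'-1=\tfrac{2\hat f(\emptyset)}{1-\mu_i}$, so the same symmetry argument ties the sign of $s-q$ to the sign of $\hat f(\emptyset)$ and yields the stated inequalities. I do not expect a genuine obstacle: the conceptual step is simply spotting the shared prefactor, after which the claim is a one-line consequence of the symmetry of $h$. The only care needed is confirming that $u,v$ (resp.\ $u',v'$) stay in $[0,1]$ across the interval $|\hat f(\emptyset)|\le|\mu_i|$ and checking the boundary points $\hat f(\emptyset)=\pm\mu_i$, where one argument reaches an endpoint and the corresponding $h$ vanishes; these merely confirm the strict inequalities and cause no difficulty.
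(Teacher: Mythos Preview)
Your argument is correct. Both approaches ultimately rest on the same observation—that $t$ and $r$ (resp.\ $s$ and $q$) carry the \emph{same} prefactor $\P{X_i}{+1}$ (resp.\ $\P{X_i}{-1}$)—but you exploit it more directly. The paper evaluates the two functions at the three sample points $\hat f(\emptyset)=-\mu_i,\,0,\,\mu_i$, uses $h(p)=h(1-p)$ to verify equality at $0$, and then invokes ``concavity of $t$ and $r$'' to interpolate the sign pattern; that last step is terse and implicitly relies on the fact that the two curves are translates of one another (so their difference is monotone). Your route bypasses this: cancelling the shared prefactor and writing $(v-\tfrac12)^2-(u-\tfrac12)^2=(v-u)(u+v-1)\propto \hat f(\emptyset)$ pins down the sign of $h(u)-h(v)$, and hence of $t-r$, at every point in a single line. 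The paper's three-point check is perhaps more pictorial, but your argument is shorter and self-contained.
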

\begin{proof}
Lets first assume $\mu_i > 0$. One can easily see, that
\begin{align*}
&t(-\mu_i) = \P{X_i}{+1}{h\left(\frac{-\mu_i+\mu_i}{1+\mu_i}\right)}=0 \\
&< r(-\mu_i) = \P{X_i}{+1}{h\left(\frac{-\mu_i+1}{1+\mu_i}\right)},
\end{align*}
and
\begin{align*}
&t(\mu_i) = \P{X_i}{+1}{h\left(\frac{\mu_i+\mu_i}{1+\mu_i}\right)} \\
&> r(\mu_i) = \P{X_i}{+1}{h\left(\frac{\mu_i+1}{1+\mu_i}\right)}=0.
\end{align*}
Further,
\begin{align*}
t(0) &= \P{X_i}{+1}{h\left(\frac{\mu_i}{1+\mu_i}\right)}\\
&= \P{X_i}{+1}{h\left(1-\frac{\mu_i}{1+\mu_i}\right)} \\
&= \P{X_i}{+1}{h\left(\frac{1}{1+\mu_i}\right)} = r(0),
\end{align*}
which due to concavity of $t$ and $r$ proofs the first part of the Lemma. The proof of the second part goes along the lines.
\end{proof}



\bibliographystyle{IEEEtran}
\bibliography{literatur}
%

\end{document}